\newtheorem{theorem}{Theorem}
\newtheorem{remark}{Remark}
\newenvironment{proof}{{\indent \indent \it Proof:\quad}}{\hfill $\blacksquare$\par}
\title{Modeling and Optimizing Latency for Delayed Hit Caching with Stochastic Miss Latency}
\author{Bowen Jiang, Chaofan Ma\\
Shanghai Jiao Tong University}
\begin{document}
\maketitle

\begin{abstract}
Caching is crucial for system performance, but the delayed hit phenomenon, where requests queue during lengthy fetches after a cache miss, significantly degrades user-perceived latency in modern high-throughput systems. While prior works address delayed hits by estimating aggregate delay, they universally assume deterministic fetch latencies. This paper tackles the more realistic, yet unexplored, scenario where fetch latencies are stochastic. We present, to our knowledge, the first theoretical analysis of delayed hits under this condition, deriving analytical expressions for both the mean and variance of the aggregate delay assuming exponentially distributed fetch latency. Leveraging these insights, we develop a novel variance-aware ranking function tailored for this stochastic setting to guide cache eviction decisions more effectively. The simulations on synthetic and real-world datasets demonstrate that our proposed algorithm significantly reduces overall latency compared to state-of-the-art delayed-hit strategies, achieving a $3\%-30\%$ reduction on synthetic datasets and approximately $1\%-7\%$ reduction on real-world traces.
\end{abstract}

\section{Introduction}
Caching is a pivotal technique pervasively employed across diverse computing and networking systems, such as Content Delivery Networks (CDNs) \cite{cdn} and Mobile Computing environments \cite{storage}, to enhance system performance and user experience. Serving requests directly from the cache—a \textbf{hit}—yields low latency, whereas a cache \textbf{miss} necessitates fetching data from slower tiers (e.g., remote servers), thereby incurring significant delays. Traditional caching strategies span a wide spectrum: from classic heuristics such as Least Recently Used (LRU) \cite{LRU}, First-In-First-Out (FIFO) \cite{FIFO}, and Least Frequently Used (LFU) \cite{LFU}; to sophisticated methods including Adaptive Replacement Cache (ARC) \cite{ARC}, ADAPTSIZE \cite{adapt}, and Least Hit Density (LHD) \cite{lhd}; and extend to contemporary learning-based approaches \cite{lrb,deep,deep2}. These diverse algorithms have predominantly focused on maximizing the cache hit ratio. Implicitly or explicitly, much of this foundational work has operated under the assumption that the fetch latency upon a miss is negligible, particularly when compared to request inter-arrival times.

However, in modern high-throughput systems, this negligible-latency assumption often breaks down. Fetch latencies from remote servers can exceed $100ms$, while average request inter-arrival times may plummet to $1 \mu s$ \cite{mad}. Consequently, when an object miss occurs, subsequent requests for that same object arriving during the fetch interval cannot be served immediately; they must wait until the fetch completes. These waiting requests are termed \textbf{delayed hits}. With an evolving ratio between
system throughputs and latencies, the impact of delayed hits on overall user-perceived latency becomes increasingly significant.
Prior works addressing delayed hits, such as MAD \cite{mad}, LAC \cite{atc}, VA-CDH \cite{VA}, and CALA \cite{cala}, primarily focus on estimating the aggregate delay defined as the sum of the initial miss latency and subsequent delayed hit latencies.
Specifically, MAD \cite{mad} calculates the historical average aggregate delay (AggDelay), assuming all prior requests are missed. LAC \cite{atc} derives an analytical expression for the mean aggregate delay under Poisson arrivals. Viewing average estimates as potentially imprecise and simple upper bounds as too conservative, CALA \cite{cala} balances these by modeling aggregate delay as a weighted sum of AggDelay and the square of miss latency. VA-CDH proposes a novel ranking function considering the variance of aggregate delay \cite{VA}.
We note that all the previous works are typically under the assumption of deterministic miss latency. 

However, the assumption of deterministic fetch latency often deviates from reality, where network conditions and server load introduce significant stochasticity into fetch times. This paper tackles the delayed hit caching problem under this more realistic, yet challenging, condition of stochastic fetch latencies. Such randomness further increases the complexity and variability of the aggregate delay, making its variance a critical factor to consider alongside its mean for effective latency optimization. Building upon the variance-aware principles introduced in \cite{VA}, we provide a rigorous theoretical analysis deriving the exact mean and variance of aggregate delay under stochastic (specifically, exponentially distributed) miss latency. Based on this analysis, we develop the ranking function tailored for this stochastic miss latency to guide cache eviction decisions effectively.
In summary, our main contributions are as follows:
\begin{itemize}
    \item  We are the first work to model and analyze the delayed hit caching problem under the more realistic condition of stochastic miss latency. All previous works related to delayed hits assume that the miss latency is deterministic.
    \item  We derive exact analytical expressions for the mean and variance of the aggregate delay that occurs following a cache miss with stochastic fetch times.
     Leveraging this theoretical analysis, we obtain the variance-aware ranking function to guide cache evictions.
    \item We demonstrate through simulations on synthetic and real-world datasets that our proposed algorithm significantly reduces overall latency compared to state-of-the-art delayed-hit caching strategies, achieving a $3\%-30\%$ reduction average on synthetic datasets and $1\%-7\%$ reduction approximately on real-world traces.
\end{itemize}

\section{Problem Definition and Motivation}
\subsection{Problem Formulation}
We consider a cache of fixed capacity $C$ operating over a discrete time horizon $T$, serving requests $R_t$ for objects $i$ drawn from a universe of $N$ types at time $t$.
Every object $i$ is associated with a size $s_i$, constrained such that $\max_i s_i < C$.
Assuming that the arrival process for each object $i$ follows a Poisson process with a rate parameter $\lambda_i$.
Cache hits yield zero latency.
Cache misses initiate a fetch process from a remote server, taking a random time $Z_{i}$, and we model $Z_i$ with a probability density function(PDF) $g(z)$. 
For analytical clarity, we specifically assume $Z_i$ follows an exponential distribution characterized by a rate parameter $\mu_i$ (where the mean fetch time is $E[Z_i] = 1/\mu_i$, denoted as $z_i$), thus $g(z) = \mu_i e^{-\mu_i z}$ for $z > 0$.
Any requests for object $i$ arriving at time $t^{\prime}$ during its fetch interval $t^{\prime} \in (t,t+Z_i]$ become delayed hits, incurring latency equal to the remaining fetch time $Z_i-(t^{\prime}-$ $t).$ Upon fetch completion, an eviction policy may evict cached objects to accommodate the newly fetched object, aiming to minimize the sum of latencies experienced by all requests throughout the entire time horizon  $T$.

\begin{remark}
    In the analysis of stochastic miss latency, the exponential distribution, with parameter $\mu_i$, is adopted as the model. Its primary advantage lies in its mathematical tractability, greatly facilitating theoretical analysis and derivations within complex systems. The concise mathematical form of the exponential distribution, coupled with its unique memoryless property, enables the derivation of exact analytical expressions for key statistics of aggregate delay, such as its mean and variance. This provides an analytical foundation for understanding the impact of randomness on aggregate delay. While acknowledging that actual miss latency  may exhibit more complex behaviors (e.g., potential heavy-tailed phenomena) than the exponential distribution can capture, the exponential distribution, as a widely applied and fundamental probabilistic model, offers the necessary simplification and analytical feasibility to investigate this challenging problem and elucidate the underlying mechanisms of key factors.
\end{remark}

\subsection{The Aggregate Delay and Motivation}\label{latency}
The aggregate delay is the initial miss latency plus the sum of latencies incurred by subsequent requests for the same object arriving during its fetch process, which is defined as follows at time $t$,
\begin{equation}\label{Aggdelay}
    D_i = Z_i + \sum_{t < t' \leq t + Z_i} (t+Z_i-t')\mathbb{I}(R_{t'} = i),
\end{equation}
where $\mathbb{I}(\cdot)$ is the indicator function and $R_{t'}$ denotes the request at time $t'$. 
However, since $D_{i}$ depends on future arrivals within the fetch interval $(t, t+Z_{i}]$, its exact value is unknown when the miss occurs. Estimating this aggregate delay is therefore a central challenge addressed by prior works \cite{mad,atc,cala,tan2025asymptotically}. 
A key limitation of prior methods addressing delayed hits is their focus on estimating only the average aggregate delay, often neglecting its inherent variability. While the importance of variance has been noted in \cite{VA}, a clear illustration of its impact is beneficial. 
To demonstrate the necessity of considering the variance of aggregate delay, we present a toy example comparing two caching policies in Fig.\ref{toy}. We consider a cache of size 1, holding either object A or B, both with a fetch latency of $z=4ms$, and the request sequences are AAABAAABBBBAABBB. The initial cache is empty.
Policy 1 (Mean-based) prioritizes caching the object with the higher observed mean aggregate delay, while Policy 2 (Mean-Variance-based) prioritizes storing objects with a large sum of the mean and standard deviation of the aggregate latency.

\begin{figure}[h]
\centering
\includegraphics[width=3.5in]{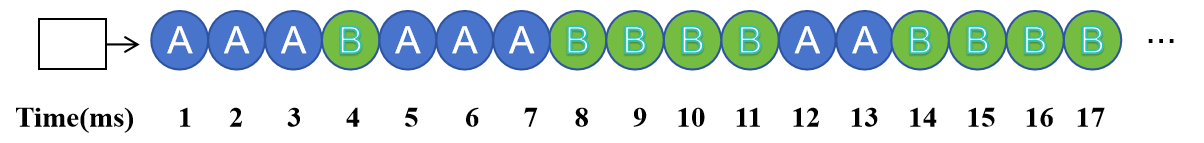}
\caption{A toy example to show the importance of variance in cache decision.
}
\label{toy}
\end{figure}
Policy 1: At time $t=1$, a request for A results in a cache miss with a latency of 4, causing subsequent requests at $t=2$ and $t=3$ to be delayed hits with latencies of 3 and 2, respectively. Following this, at $t=4$, a request for B incurs a miss with a latency of 4. The object A is retrieved and cached at $t=5$, leading to zero-latency hits for requests from $t=5$ to $t=7$. When B is retrieved at $t=8$, a comparison is made: A’s average aggregate delay (accumulated from $t=1-3$) is 9, while B’s (from the $t=4$ miss) is 4. Since A’s average latency is higher, A is retained in the cache. Consequently, requests for B from $t=8$ to $t=11$ result in a miss followed by delayed hits, incurring latencies of 4, 3, 2, and 1. At $t=12$, B is retrieved again; its new average latency becomes $(10 + 4) / 2 = 7$, while A’s remains 9. As A’s average is still higher, A is kept, resulting in hits with zero latency at $t=12$ and $t=13$. Subsequently, requests for B from $t=14$ to $t=17$ trigger another miss sequence with latencies of 4, 3, 2, and 1, respectively. Therefore, the total latency across this entire sequence under Policy 1 amounts to 33.

Policy 2: 
The cache behavior matches Policy 1 until $t=12$. At this point, a new rule based on the sum of average aggregate delay and standard deviation determines retention. A’s sum is 9 (avg 9, std 0), while B’s is 10 (avg 7, std 3). Since $10 > 9$, B is cached, causing A to miss at $t=12$ (latency 4) and have a delayed hit at $t=13$ (latency 3). B subsequently hits with zero latency at $t=14-15$. When A’s data arrives at $t=16$, the sums are re-evaluated: A is now 9 (avg 8, std 1), still less than B’s 10. B remains cached, resulting in zero-latency hits at $t=16-17$. This modified approach leads to a total latency of 30. This example demonstrates that considering variance leads to a better caching decision and reduces the overall latency.

\section{Characteristics of Aggregate Delay}\label{texingfenxi}
As previously demonstrated, incorporating both the mean and variance of aggregate delay into caching decisions enables more effective latency reduction. Motivated by this finding, we analyze the statistical properties including the distribution and variability of aggregate delay under the assumption of stochastic miss delays in this section.

\subsection{The distribution of aggregate delay}
Recall the PDF of the random miss latency is $g(z)$ and its mean value is $\mu_i=1/z_i$.
Because the object $i$ arrives according to a Poisson process with rate $\lambda_i$, thus the probability that the number of arrivals $N_i^z$ during the fetch period of length $z_i$ euqals $k$ is:
\begin{equation}
P(N_i^z = k) = \frac{e^{-\lambda_i z_i}(\lambda_i z_i)^k}{k!}, \quad k=0, 1, 2, \ldots
\label{eq:poisson_prob_detailed_en}
\end{equation}

Based on Theorem 1 in \cite{VA},
the PDF of $D_i$ can be obtained by integrating the conditional PDF $f(D_i | Z_i = z)$ over all possible values of the miss latency $z$, weighted by $g(z)$:
\begin{equation}\label{jifen}
\begin{split}
f(D_i)&= \int_0^\infty f(D_i | Z_i=z) g(z) dz\\
&= \int_0^\infty \left[ e^{-\lambda_i z} \delta(D_i - z) \right] \mu_i e^{-\mu_i z} dz+\\
& \quad \int_0^\infty \left[ \sum_{k=1}^{\infty} \frac{e^{-\lambda_i z}(\lambda_i z)^k}{k!} f_{D_i|k,z}(D_i) \right] \mu_i e^{-\mu_i z} dz,
\end{split}
\end{equation}
where $f_{D_i|k,z}(D_i) = \frac{1}{z^k (k-1)!} \sum_{j=0}^{\lfloor D_i/z \rfloor - 1} (-1)^j \binom{k}{j} (D_i - (j+1)z)^{k-1}$, $\lambda_i$ is the arrival rate for object $i$,  and $k$ is the total arrivals for object $i$ during the fetch process.
Using the property of the Dirac $\delta$ function $\left( \int \phi(z) \delta(z - a) dz = \phi(a) \right)$, the first part in \eqref{jifen} can be calculated as follows,
\begin{equation}
    \int_0^\infty \left[ e^{-\lambda_i z} \delta(D_i - z) \right] \mu_i e^{-\mu_i z} dz=\mu_i e^{-(\lambda_i + \mu_i) D_i}.
\end{equation}

Analytically solving the second integral in \eqref{jifen} presents major difficulties. Notably, the upper limit $\lfloor D_i/z \rfloor-1$ of the inner summation depends directly on the integration variable $z$, preventing straightforward integration as the number of terms changes. Furthermore, the effective integration range $z \in [D_i/(k + 1), D_i]$ varies with the summation index $k$, complicating the overall integration process.

\begin{remark}
    \textbf{(Hardness).}\emph{
Although we can formulate a formal integral expression for the PDF of the aggregate delay $D_i$ using the Law of Total Probability when the delay $Z_i$ follows an exponential distribution, obtaining its closed-form analytical solution for the general case $(k \geq 1)$ appears to be intractable due to the complexity of the resulting integral.
Potential future directions to address these challenges include employing numerical integration techniques to approximate the integral's value for specific parameters, likely after truncating the infinite series over $k$. Alternatively, analyzing the system using Laplace transforms ($L_{D_{i}}(s) = E[e^{-sD_i}]$) might offer a more structured approach \cite{Lap}, potentially simplifying the calculation of moments or enabling numerical inversion to approximate the probability density function.}
\end{remark}

\subsection{The mean and variance of aggregate delay}
The analysis of the mean and variance of aggregate delay under stochastic miss latency builds upon the corresponding results for deterministic miss latency. 
Therefore, we first present the mean and variance of aggregate delay for the deterministic scenario, as detailed in Theorem \ref{meanandvar}.

\begin{theorem}[in \cite{VA}]
\label{meanandvar} 
Assume object $i$ arrives according to a Poisson process with rate $\lambda_i$ and the miss latency is $z_i$. The mean and variance of $D_i$ are:
\begin{equation}
    E[D_i] = z_i \left(1 + \frac{\lambda_iz_i}{2}\right),\operatorname{Var}(D_i) = \frac{z_i^3 \lambda_i}{3}
\end{equation}

\end{theorem}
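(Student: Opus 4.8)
The plan is to exploit the fact that in the deterministic regime the fetch duration $Z_i = z_i$ is constant, so all randomness in $D_i$ comes from the Poisson arrivals during the fetch window $(t, t+z_i]$. Writing $D_i = z_i + S$, where $S = \sum_{t < t' \leq t+z_i}(t+z_i - t')\mathbb{I}(R_{t'}=i)$ is the total delayed-hit latency, it suffices to compute $E[S]$ and $\operatorname{Var}(S)$, since adding the constant $z_i$ shifts the mean by $z_i$ and leaves the variance unchanged.

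First I would condition on the number of arrivals $N$ in the window, which is Poisson with mean $\lambda_i z_i$. The key structural fact is the conditional-uniformity property of the Poisson process: given $N = n$, the $n$ arrival epochs are distributed as $n$ i.i.d. uniform points on $(t, t+z_i]$. Shifting coordinates via $U_j = t_j - t$, each delayed hit contributes $Y_j = z_i - U_j$, and the $Y_j$ are conditionally i.i.d. uniform on $[0, z_i)$. Thus $S = \sum_{j=1}^{N} Y_j$ is a compound Poisson sum, which reduces the problem to elementary moment computations for a single uniform variable.

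For the mean I would apply the law of total expectation (equivalently Wald's identity), giving $E[S] = E[N]\,E[Y] = \lambda_i z_i \cdot (z_i/2) = \lambda_i z_i^2/2$, so that $E[D_i] = z_i + \lambda_i z_i^2/2 = z_i(1 + \lambda_i z_i/2)$. For the variance I would use the compound-Poisson identity $\operatorname{Var}(S) = E[N]\,E[Y^2]$, which follows from the law of total variance once one observes that the $\operatorname{Var}(N)(E[Y])^2$ term combines with $E[N]\operatorname{Var}(Y)$ into the single term $E[N]E[Y^2]$. With $E[Y^2] = \int_0^{z_i} y^2\,dy / z_i = z_i^2/3$, this yields $\operatorname{Var}(D_i) = \operatorname{Var}(S) = \lambda_i z_i \cdot z_i^2/3 = \lambda_i z_i^3/3$.

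The main subtlety, and the step most easily mishandled, is the variance: because the summation count $N$ is itself random, one cannot simply sum conditional variances but must track the fluctuation of $N$ through the full law of total variance. The clean collapse to $E[N]E[Y^2]$ is precisely the Poisson-specific simplification (where $\operatorname{Var}(N) = E[N]$) that makes the final expression tidy. The conditional-uniformity property is the conceptual crux, since it converts an ordered-arrivals problem into an exchangeable i.i.d. computation; everything downstream is then routine integration.
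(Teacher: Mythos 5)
Your derivation is correct: writing $D_i = z_i + S$ with $S$ a compound Poisson sum of i.i.d.\ $U[0,z_i)$ contributions (via the conditional-uniformity of Poisson arrival epochs) gives exactly $E[D_i]=z_i(1+\lambda_i z_i/2)$ and $\operatorname{Var}(D_i)=\lambda_i z_i^3/3$, and your use of $\operatorname{Var}(N)=E[N]$ to collapse the total-variance formula to $E[N]E[Y^2]$ is the right Poisson-specific step. Note, however, that the paper itself gives no proof of this theorem --- it is stated as imported from reference \cite{VA} and used as a black box in the proof of Theorem \ref{agg_delay_moments} --- so there is no in-paper argument to compare against; your argument is the standard one (equivalently, Campbell's theorem applied to $f(s)=z_i-s$ on $[0,z_i]$) and correctly fills that gap.
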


Under stochastic miss latency, the mean and variance of aggregate delays are shown in Theorem \ref{agg_delay_moments}.

\begin{theorem}\label{agg_delay_moments}
\emph{
Let the arrivals for object $i$ follow a Poisson process with rate $\lambda_i$. Let the miss latency $Z_i$, follow an exponential distribution with rate $\mu_i=1/z_i$, i.e., $Z_i \sim \text{Exp}(\mu_i)$.  Then, the mean and variance of the aggregated delay $D_i$ are given by:
\begin{align}
    E[D_i] &= z_i + \lambda_i{z_i^2} \label{eq:mean_agg_delay} \\
    Var(D_i) &= {z_i^2} +  6\lambda_iz_i^3+ 5 \lambda_i^2{z_i^4} \label{eq:var_agg_delay}
\end{align}}
\end{theorem}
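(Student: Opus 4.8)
The plan is to exploit the conditional structure by conditioning on the realized fetch latency $Z_i = z$, which collapses the stochastic problem onto the deterministic one already resolved in Theorem \ref{meanandvar}. For the mean I would invoke the law of total expectation, $E[D_i] = E_{Z_i}\!\big[E[D_i \mid Z_i]\big]$. Theorem \ref{meanandvar} supplies the conditional mean $E[D_i \mid Z_i = z] = z + \lambda_i z^2/2$, so the outer expectation reduces to a linear combination of the first two moments of $Z_i$. Since $Z_i \sim \text{Exp}(\mu_i)$ has $n$-th moment $E[Z_i^n] = n!/\mu_i^n = n!\,z_i^n$, substituting $E[Z_i] = z_i$ and $E[Z_i^2] = 2z_i^2$ gives $E[D_i] = z_i + \lambda_i z_i^2$ immediately, matching \eqref{eq:mean_agg_delay}. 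Crucially, this avoids any contact with the intractable PDF integral of the previous subsection.

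For the variance I would apply the law of total variance, $\operatorname{Var}(D_i) = E_{Z_i}\!\big[\operatorname{Var}(D_i \mid Z_i)\big] + \operatorname{Var}_{Z_i}\!\big(E[D_i \mid Z_i]\big)$. The first term uses the conditional variance from Theorem \ref{meanandvar}, namely $\operatorname{Var}(D_i \mid Z_i = z) = \lambda_i z^3/3$, whose expectation is $\tfrac{\lambda_i}{3}E[Z_i^3] = 2\lambda_i z_i^3$ via $E[Z_i^3] = 6z_i^3$. The second term is the variance of the (random) conditional mean $W := Z_i + \tfrac{\lambda_i}{2}Z_i^2$, which I would evaluate through $\operatorname{Var}(W) = E[W^2] - (E[W])^2$.

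The main obstacle—indeed the only place demanding care—is this second variance term, because expanding $W^2 = Z_i^2 + \lambda_i Z_i^3 + \tfrac{\lambda_i^2}{4}Z_i^4$ pulls in the fourth moment $E[Z_i^4] = 24z_i^4$ and then requires correctly subtracting $(E[W])^2 = (z_i + \lambda_i z_i^2)^2$. I expect this to yield $\operatorname{Var}(W) = z_i^2 + 4\lambda_i z_i^3 + 5\lambda_i^2 z_i^4$, and adding the two contributions $2\lambda_i z_i^3 + (z_i^2 + 4\lambda_i z_i^3 + 5\lambda_i^2 z_i^4)$ recovers the claimed $\operatorname{Var}(D_i) = z_i^2 + 6\lambda_i z_i^3 + 5\lambda_i^2 z_i^4$ in \eqref{eq:var_agg_delay}. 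The entire argument rests on two ingredients: Theorem \ref{meanandvar} for the conditional moments, and the clean factorial form of the exponential moments, which makes every expectation an elementary substitution rather than a new integration.
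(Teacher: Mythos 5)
Your proposal is correct and follows essentially the same route as the paper's own proof: conditioning on $Z_i$, applying the laws of total expectation and total variance with the conditional moments from Theorem \ref{meanandvar}, and substituting the factorial moments $E[Z_i^n]=n!\,z_i^n$ of the exponential distribution. All intermediate quantities ($E[Z_i^3]=6z_i^3$, $E[Z_i^4]=24z_i^4$, $\operatorname{Var}(W)=z_i^2+4\lambda_i z_i^3+5\lambda_i^2 z_i^4$) match the paper's calculation.
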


\begin{proof}
The derivation proceeds by conditioning on the miss latency $Z_i \sim \text{Exp}(\mu_i), z_i = 1/\mu_i $ and applying the Law of Total Expectation for the mean and the Law of Total Variance for the variance.
We first derive the mean $E[D_i]$.
By the law of total expectation: $E[D_i] = E[ E[D_i | Z_i] ]$.

Based on Theorem \ref{meanandvar}, we have:
\begin{equation}\label{tiaojian}
     E[D_i | Z_i] = Z_i(1 + \lambda_i Z_i / 2)
\end{equation}


Using Equation~\eqref{tiaojian} and $E[D_i] = E[ E[D_i | Z_i] ]$, we have:
\begin{equation}
    E[D_i] = E[ Z_i(1 + \lambda_i Z_i / 2) ] = E[Z_i + \lambda_i Z_i^2 / 2]
\end{equation}

Using the linearity of expectation:
$$E[D_i] = E[Z_i] + \frac{\lambda_i}{2} E[Z_i^2]$$
For $Z_i \sim \text{Exp}(\mu_i)$, we know that $E[Z_i] = 1/\mu_i$ and $Var(Z_i) = 1/\mu_i^2$. Thus, $E[Z_i^2] = Var(Z_i) + (E[Z_i])^2 = 1/\mu_i^2 + (1/\mu_i)^2 = 2/\mu_i^2$.
Substituting these values yields:
\begin{equation}
    E[D_i] = \frac{1}{\mu_i} + \frac{\lambda_i}{2} \left( \frac{2}{\mu_i^2} \right) = \frac{1}{\mu_i} + \frac{\lambda_i}{\mu_i^2} = z_i + \lambda_i{z_i^2}.
\end{equation}

Next, we derive the variance $Var(D_i)$.
By the law of total variance: $$Var(D_i) = E[Var(D_i | Z_i)] + Var(E[D_i | Z_i])$$
We need to compute two terms: $E[Var(D_i | Z_i)]$ and $Var(E[D_i | Z_i])$.

\textit{Computing $Var(D_i | Z_i=z)$:}
Based on Theorem \ref{meanandvar}, we have:
\begin{equation}
    Var(D_i | Z_i=z)=\frac{1}{3} \lambda_i z^3.
\end{equation}

Therefore,
$E[Var(D_i | Z_i)] = E\left[\frac{1}{3} \lambda_i Z_i^3\right] = \frac{\lambda_i}{3} E[Z_i^3]$.
For $Z_i \sim \text{Exp}(\mu_i)$, $E[Z_i^n] = n! / \mu_i^n$. Thus, $E[Z_i^3] = 3! / \mu_i^3 = 6 / \mu_i^3$.
$$E[Var(D_i | Z_i)] = \frac{\lambda_i}{3} \frac{6}{\mu_i^3} = \frac{2 \lambda_i}{\mu_i^3}$$

\textit{Computing the second term of the law of total variance: $Var(E[D_i | Z_i])$}
From Equation~\eqref{tiaojian}, we know that:
$$Var(E[D_i | Z_i]) = Var(Z_i(1 + \lambda_i Z_i / 2)) = Var(Z_i + \frac{\lambda_i}{2} Z_i^2)$$
Let $Y=E[D_i \mid Z_i]$. Using $Var(Y) = E[Y^2] - (E[Y])^2$, we know from the mean calculation that $E[Y] = E[Z_i + \frac{\lambda_i}{2} Z_i^2] = E[D_i] = \frac{1}{\mu_i} + \frac{\lambda_i}{\mu_i^2}$.
Therefore, we can obtain

\begin{equation}
    \begin{aligned}
        E[Y^2] &= E[(Z_i + \frac{\lambda_i}{2} Z_i^2)^2]\\
&= E[Z_i^2 + \lambda_i Z_i^3 + \frac{\lambda_i^2}{4} Z_i^4]\\
&=E[Z_i^2] + \lambda_i E[Z_i^3] + \frac{\lambda_i^2}{4} E[Z_i^4] \\ 
&= \frac{2}{\mu_i^2} + \frac{6\lambda_i}{\mu_i^3} + \frac{6\lambda_i^2}{\mu_i^4}.
    \end{aligned}
\end{equation}

Thus, we have,
\begin{equation}
    \begin{aligned}
    Var(E[D_i | Z_i]) &= \left( \frac{2}{\mu_i^2} + \frac{6\lambda_i}{\mu_i^3} + \frac{6\lambda_i^2}{\mu_i^4} \right) - \left( \frac{1}{\mu_i} + \frac{\lambda_i}{\mu_i^2} \right)^2 \\ 
    &= \frac{1}{\mu_i^2} + \frac{4\lambda_i}{\mu_i^3} + \frac{5\lambda_i^2}{\mu_i^4} 
    \end{aligned}
\end{equation}

Now we can compute the total variance $Var(D_i)$,
\begin{equation}
    \begin{aligned}
Var(D_i) &= E[Var(D_i | Z_i)] + Var(E[D_i | Z_i]) \\ &= \frac{2 \lambda_i}{\mu_i^3} + \left( \frac{1}{\mu_i^2} + \frac{4\lambda_i}{\mu_i^3} + \frac{5\lambda_i^2}{\mu_i^4} \right) \\ &= \frac{1}{\mu_i^2} + \frac{6 \lambda_i}{\mu_i^3} + \frac{5 \lambda_i^2}{\mu_i^4} \\
&= {z_i^2} +  6\lambda_iz_i^3+ 5 \lambda_i^2{z_i^4} .
    \end{aligned}
\end{equation}

\end{proof}

\begin{remark}
\textbf{(Parameter dependency).}
\emph{
Theorem \ref{agg_delay_moments} considers the aggregate delay where miss latency, $Z_{i}$, follows an exponential distribution with mean $z_{i}$. The mean aggregate delay grows faster with the average latency $z_{i}$ (due to the $z_{i}^{2}$ term) and linearly with arrival rate $\lambda_{i}$. Variance increases dramatically, involving higher powers of both $z_{i}$ and $\lambda_{i}$ (up to $\lambda_{i}^{2}z_{i}^{4}$), indicating high sensitivity to these parameters. Intuitively, variability now arises from two sources: the randomness of the latency $Z_{i}$ itself and the random arrivals interacting with this random latency. This dual randomness causes both mean and especially variance to react much more strongly to changes in system load $\lambda_{i}$ and average latency $z_{i}$.}
\end{remark}

\section{Our algorithm}\label{four}
Our algorithm builds upon the VA-CDH framework introduced in \cite{VA}, which aims to minimize latency in caching with delayed hits under deterministic miss latency. A core principle of VA-CDH is incorporating the variability of aggregate delay into eviction decisions. 

To achieve this, it employs a novel variance-aware ranking function to evaluate the aggregate delay of  object $i$:
\begin{equation}\label{rank}
f_i = \frac{E[D_i] + \omega \sigma[D_i]}{R_i s_i},
\end{equation}
where $E[D_i]$ and $\sigma[D_i]$ are the estimated mean and standard deviation of the aggregate delay $D_i$, respectively. $R_i$ is the estimated residual time until the next request for $i$, $s_i$ is its size, and $\omega > 0$ is a hyperparameter controlling the sensitivity to aggregate delay variability. A higher score $f_i$ indicates a higher priority for keeping object $i$ in the cache.

In this paper, we specifically address the scenario where the fetch latency $Z_i$ upon a miss is stochastic, modeled as an exponential distribution with mean $z_i = 1/\mu_i$. 
Based on the derived mean and standard deviation of the aggregate delay as presented in Theorem \ref{agg_delay_moments}, 
the ranking function \eqref{rank} takes the specific form:
\begin{equation}\label{rank2}
f_i = \frac{(z_i + \lambda_i z_i^2) + \omega \sqrt{z_i^2 + 6\lambda_i z_i^3 + 5\lambda_i^2 z_i^4}}{R_i s_i},
\end{equation}
where the terms in the numerator represent the analytical $E[D_i]$ and $\omega\sigma[D_i]$ respectively.
Our algorithm maintains online estimates for the residual time $R_i$ (using LRU) and the arrival rate $\lambda_i$ (using the inverse of the mean inter-arrival time) similar to \cite{VA} . These parameters are estimated from recent request history within a sliding window of size $S$ to handle non-stationarity efficiently, avoiding the overhead of tracking the entire history. When an eviction is necessary, our algorithm calculates the rank $f_i$ for each cached object using its estimated $R_i$, $\lambda_i$, and the specialized ranking function \eqref{rank2}. Objects yielding the lowest rank scores are then evicted to make space for the newly fetched object.

\section{Simulation Results}\label{five}

\subsection{Methodology}
We consider the baseline algorithms, including  LRU \cite{LRU}, LAC \cite{atc}, VA-CDH \cite{VA},
CALA \cite{cala}, LHD\cite{lhd}, LRB\cite{lrb}, ADAPTSIZE \cite{adapt},
LHDMAD \cite{mad} and LRUMAD\cite{mad}.
We note that all of these baselines assume that the delay after a miss is a deterministic constant.

The performance of algorithm A is measured by the latency improvement relative to LRU\cite{cala}, which is defined as follows,
\begin{equation}
    \text { \small{Latency Improvement of } }\mathrm{A}=\frac{\text { \small{Latency(LRU)} }- \text {\small{ Latency(A)} }}{\text { \small{Latency(LRU)} }},
\end{equation} where $\text{Latency(LRU)}$ is the total latency of LRU algorithm, $\text{Latency(A)}$ is the total latency of algorithm A. 
In subsection \ref{hechengshiyan} and \ref{shiji}, we set $S=10K$ and $\omega=1$. 




\subsection{Synthetic dataset}\label{hechengshiyan}

\begin{figure}[h]
\centering
\includegraphics[width=2.5in]{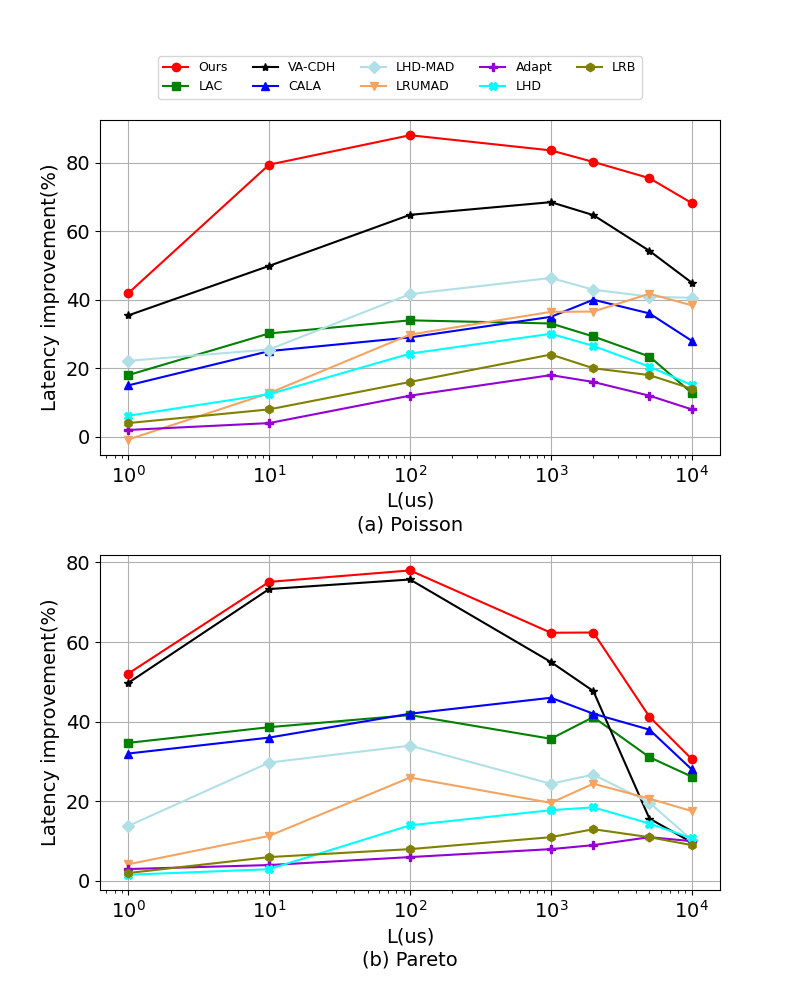}
\caption{Comparison of latency improvement between our algorithm and SOTAs under the
synthetic dataset($C=500\text{MB}$).
}
\label{hechengsuiji}
\end{figure}

The performance of the our algorithm is first evaluated using a synthetic dataset comprising 100,000 requests for 100 unique objects. Object popularity in this dataset follows a Zipf distribution, and object sizes are integers selected uniformly from the range $[1\text{MB}, 100\text{MB}]$. We simulate under two distinct arrival scenarios: one adhering to a Poisson process and another modeled by a Pareto distribution. 
The latency incurred upon a
cache miss is modeled as a constant delay $L$ plus
a component proportional to the object size.
 Our algorithm can improve the performance by approximately $3\% -30\%$ on two different datasets as shown in Fig.\ref{hechengsuiji}.
 Importantly, this performance advantage persists even when arrivals follow the Pareto distribution, deviating from the Poisson model assumption used in our theoretical analysis.


\subsection{Real world traces}\label{shiji}

\begin{figure}[h]
\centering
\includegraphics[width=4in]{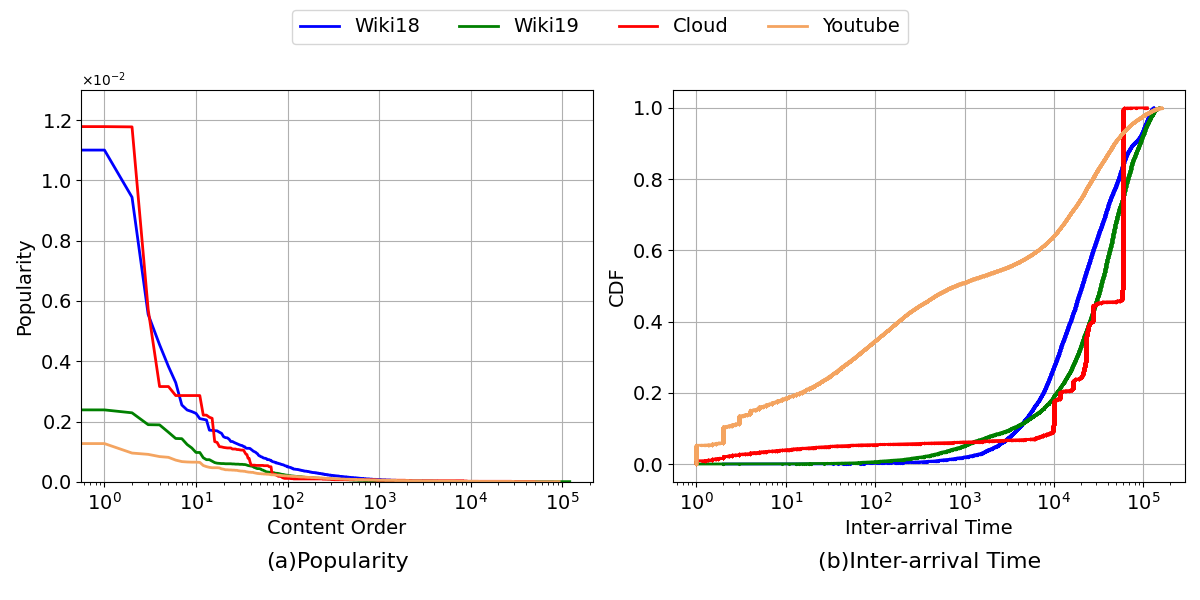}
\caption{The content popularity and average interval-time distributions of the
four real-world traces.}
\label{texing}
\end{figure}

\begin{figure}[h]
\centering
\includegraphics[width=2in]{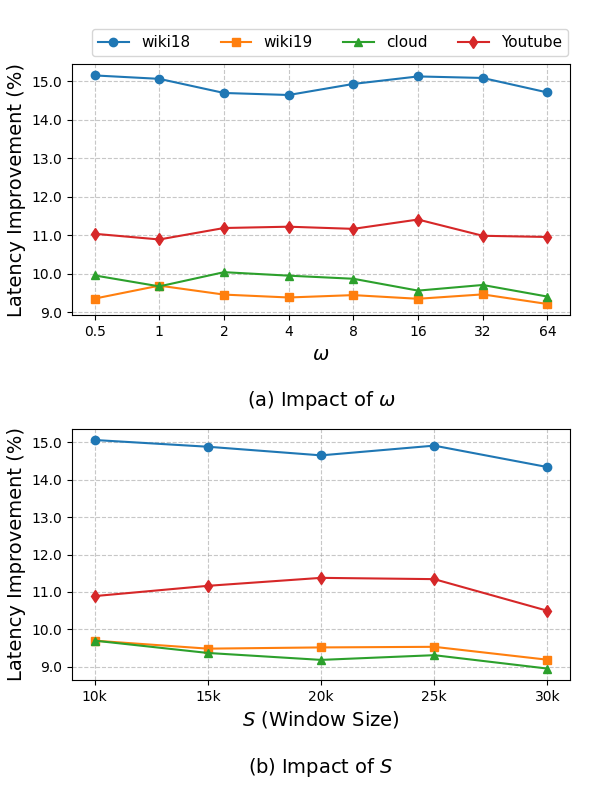}
\caption{The impact of hyperparameters on latency improvement.
}
\label{canshu}
\end{figure}


We further validate our algorithm on four real-world traces: Wiki2018 \cite{lrb}, Wiki2019 \cite{lrb}, Cloud \cite{cloud}, and YouTube \cite{you}. 
We present the object popularity and average inter-arrival time distributions of these four traces in Fig. \ref{texing}. 
Using a 256 GB cache and various fetch latency settings, Fig. \ref{256suiji} compares the latency improvement of our algorithm relative to LRU against state-of-the-art (SOTA) algorithms.

\begin{figure*}[h]
\centering
{\includegraphics[width=6in]{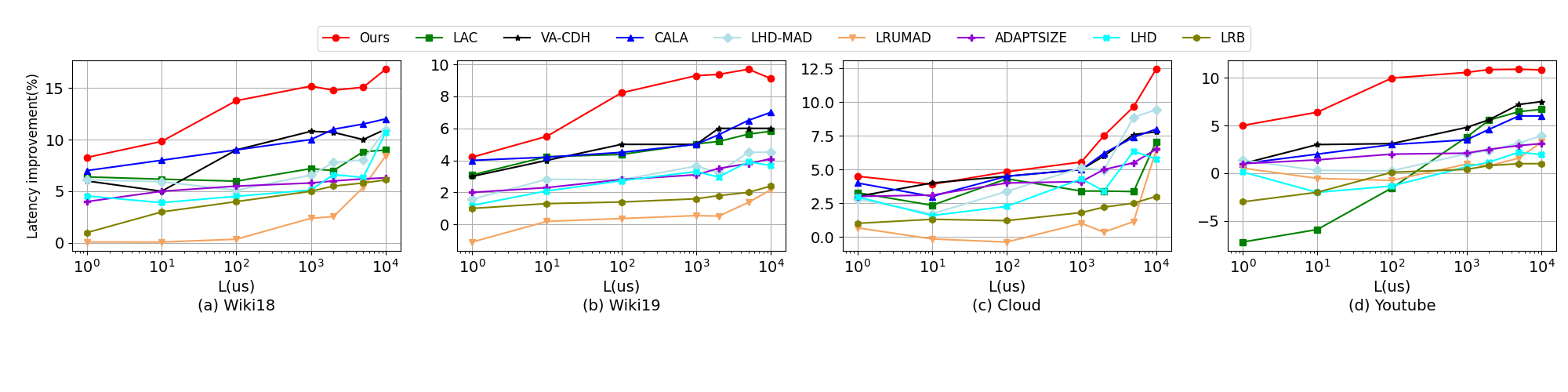}}%
\label{fig_first_case}
\caption{Comparison of latency improvement between our algorithm and SOTAs using a 256GB cache.}
\label{256suiji}
\end{figure*}

Our algorithm consistently outperforms these SOTA algorithms, achieving approximate latency reductions of $1\%-5\%$ on Wiki2018, $1\%-4\%$ on Wiki2019, $1\%-2\%$ on Cloud, and $3\%-7\%$ on YouTube. This performance advantage is primarily driven by our ranking function's effective incorporation of miss latency randomness to prioritize evictions.


\subsection{Sensitivity Analysis}\label{mingan}



We perform a sensitivity analysis, presented in Fig. \ref{canshu}, to assess our algorithm's performance dependency on its key hyperparameters: sliding window size $S$ and variance control parameter $\omega$. Both require careful tuning due to inherent trade-offs – $\omega$ balancing mean versus variance influence, and $S$ managing the compromise between historical learning effectiveness and responsiveness to changing request patterns. The analysis involves systematically varying $\omega$ (at $S=10K$) and $S$ (at $\omega=1$).
The delay $L$ in miss latency is $5ms$.
Across these varied settings, our algorithm consistently outperforms current STOA approaches, confirming its operational robustness.

\section{Conclusion}\label{six}
In this paper, we address the delayed hits in caching systems, specifically tackling the often-overlooked scenario of stochastic fetch latencies.
We provide the theoretical analysis under stochastic fetch latency, deriving exact expressions for the mean and variance of aggregate delay. Based on these findings, we have the variance-aware ranking function designed to guide eviction decisions by considering both the mean aggregate delay and its variability. The simulations validate our approach, demonstrating significant overall latency reductions compared to state-of-the-art delayed-hit algorithms.

\end{document}